\UseRawInputEncoding
\documentclass[10pt, conference, compsocconf]{IEEEtran}
%

\usepackage[T1]{fontenc}

\usepackage{amsmath,amssymb}
\usepackage{graphicx}
\usepackage{color}
\usepackage{amsthm}
\usepackage{bm}
\usepackage{algorithm}
\usepackage{algorithmic}
\usepackage[compress]{cite}


\newtheorem{fact}{Fact}
\newtheorem{theorem}{Theorem}

\newtheorem{definition}{Definition}

\newtheorem{remark}{Remark}

\newtheorem{example}{Example}

\newcommand{\vbu}{{\bm{v}}}
\newcommand{\ubu}{{\bm{u}}}

\newcommand{\cbu}{{\bm{c}}}

\newcommand{\hbu}{{\bm{h}}}
\newcommand{\sbu}{{\bm{s}}}

\newcommand{\ybu}{{\bm{y}}}

\ifCLASSINFOpdf
\else
\fi
\hyphenation{op-tical net-works semi-conduc-tor}

\begin{document}
%
\title{A Note on ``Optimum Sets of Interference-Free Sequences With Zero Autocorrelation Zone''}


\author{\IEEEauthorblockN{Qiping Fang}
\IEEEauthorblockA{School of Cyber Engineering\\
Xidian University\\
Xi'an, China\\
qpfang@stu.xidian.edu.cn}
\and
\IEEEauthorblockN{Zilong Wang}
\IEEEauthorblockA{The State Key Laboratory of Integrated Service Networks\\
Xidian University\\
Xi'an, China\\
zlwang@xidian.edu.cn}
}


%


\maketitle

\begin{abstract}
In this paper, a simple construction of interference-free zero correlation zone (IF-ZCZ) sequence sets is proposed by well designed finite Zak transform lattice tessellation.
Each set is characterized by the period of sequences $KM$, the set size $K$ and the length of zero correlation zone  $M-1$, which is optimal with respect to the Tang-Fan-Matsufuji bound.
Secondly, the transformations that keep the properties of the optimal IF-ZCZ sequence set unchanged are given, and the equivalent relation of the optimal IF-ZCZ sequence set is defined based on these transformations.
Then, it is proved that the general construction of the optimal IF-ZCZ sequence set proposed by Popovic is equivalent to the simple construction of the optimal IF-ZCZ sequence set, which indicates that the generation of the optimal IF-ZCZ sequence set can be simplified.
Moreover, it is pointed out that the alphabet size for the special case of the simple construction of the optimal IF-ZCZ sequence set can be a factor of the period.
Finally, both the simple construction of the optimal IF-ZCZ sequence set and its special case have sparse and highly structured Zak spectra, which can greatly reduce the computational complexity of implementing matched filter banks.

\end{abstract}

\begin{IEEEkeywords}
transformations, sequences, zero correlation zone, interference-free, alphabet size.

\end{IEEEkeywords}

%
\IEEEpeerreviewmaketitle

\section{Introduction}
	Sequence set with good correlation is of considerable interest in many applications in communication and radar system. The ideal sequence set should have perfect impulse-like auto-correlation as well as all-zero cross-correlation for all pairs of sequences. Unfortunately, the famous Welch bound \cite{Welch} implies that it is impossible to have impulse-like autocorrelation functions and all-zero cross-correlation functions simultaneously in a sequence set.

While the ideal sequence set is unattainable, an alternate compromise is to require that out-of-phase auto-correlation and cross-correlation of sequences are equal to zero in a finite zone. Such sequences are referred to as {\em zero correlation zone} (ZCZ) sequences which have found many applications in quasi-synchronous CDMA (QS-CDMA) system\cite{Fan1}, ranging system \cite{Coker}, channel estimation \cite{Islam} and spectrum-spreading \cite{Jiang}. Moreover, ZCZ sequences have been deployed as uplinked random access channel preambles in the fourth-generation cellular standard LTE \cite{4G}.

Generally, an ($N, K, Z_{cz}$) ZCZ sequence set is characterized by the period of sequences $N$, the set size $K$ and the length of zero correlation zone  $Z_{cz}$,  The upper bound given by Tang {\em et al} \cite{TangBound} implies that the parameters of a ZCZ sequence set must satisfy
\begin{equation*}
	K(Z_{cz}+1)\leq N.
\end{equation*}
This theoretical bound describes the tradeoff between the set size and the length of ZCZ for a fixed period of a sequence set. A ZCZ sequence set reaching the theoretical bound is called optimal.

 	It is remarkable that the above theoretical bound does not make any restrictions on the behavior of correlation functions outside ZCZ. We take into account ZCZ sequence sets with all-zero cross-correlations, which are referred to as {\em interference-free} ZCZ (IF-ZCZ) sequence sets in this paper. It's possibly beneficial when IF-ZCZ sequences are deployed in some interference-limited systems, such as heterogeneous cellular networks (HetNet) \cite{Rajamohan}, multi-function radar system envisaged for autonomous cars \cite{Patole}. For instance, an envisioned HetNet is composed of many low powered base stations (BS) within the coverage area of conventional BS. With the increasing density of BS in a HetNet, the signals from multiple BS in its neighborhood are superposed to the mobile terminal (MT) leading to severe interference-limited performance. MT may attept to connect to the closest low powered BS but the strongest signal may come from a conventional BS.
 On account of some given scenarios of HetNet, the delays of conventional cell signals may lay in a correlation zone whose length is hard to predict. IF-ZCZ sequence set with proper parameters is a good candidate for these scenarios.


A series of papers \cite{Brozik_B,Matsufuji,Mow} are devoted to the design of optimal IF-ZCZ sequence sets. In \cite[Equ.2]{Popovic_IF-ZAZ}, Popovi$\rm{\acute{c}}$ presented a general construction including all known constructions of optimal IF-ZCZ sequence sets. Furthermore, as the important special case of the general construction, Popovi$\rm{\acute{c}}$ presented the construction of generalized chirp-like IF-ZCZ sequences that allow particularly simple implementation of the corresponding banks of matched filters. 


In this paper, firstly, a simple construction of the optimal IF-ZCZ sequence set is given. Then, we give transformations that keep the properties of optimal IF-ZCZ sequence sets unchanged. Based on these transformations, we define the equivalent relationship of optimal IF-ZCZ sequence sets. Then, we prove that the general construction \cite{Popovic_IF-ZAZ} is equivalent to a simple construction of optimal IF-ZCZ sequence sets.
From the perspective of the simple construction of optimal IF-ZCZ sequence sets, the optimal $(KM, K, M-1)$ IF-ZCZ sequence set can be interpreted as consisting of element-by-element multiplication of $K$ discrete Fourier transform sequences of $N=KM$ with a common sequence obtained by periodically extending $K$ times a perfect sequence $\hbu_a$ of any length $M$. In addition, there is no need to consider other parameters and their limitations.
Moreover, it can be seen from the expression of the simple construction of optimal IF-ZCZ sequence sets that the alphabet size doesn't seem to be smaller than the period $N$ of the sequences. However, we indicate that the alphabet size of the general construction \cite{Popovic_IF-ZAZ} can be $KM$, which is a factor of the period $N=KM^2$ of the sequences. Finally, the simple construction of the optimal IF-ZCZ sequence set and its special case sequences both have sparse and highly structured Zak spectra, which can greatly reduce the computational complexity of implementing matched filter banks.

The rest of this paper is organized as follows. In Section \ref{1}, we introduce some basic definitions and notations of IF-ZCZ sequence set and finite Zak transform.  A simple construction of the optimal  IF-ZCZ sequence sets is given in Section \ref{2}. Transformations, equivalent relationships, and proof are given in Section \ref{3}. In Section \ref{4}, we indicate that the alphabet size of the optimal IF-ZCZ sequence sets can be reduced. The computational complexities of implementing matched filter banks are given in Section \ref{5}. In Section \ref{6}, we conclude the paper.

\section{preliminaries}\label{1}
	In this section, we introduce some basic definitions and notations of IF-ZCZ sequence set and Zak transform. Throughout the paper, $\omega_N:=e^{-2\pi \sqrt{-1}/N}$ is the $N$-th root of unity.
\subsection{ZCZ Sequence Set}
Let $\bm{\mathcal{S}}=\{\ubu_a: 0\leq a\leq K-1\}$ be a sequence set with $K$ complex sequences of period $N$. We first define the correlation of sequences. 	
\begin{definition}
	Let $\ubu_a=(u_a(0),u_a(1),\cdots,u_a(N-1))$ and $\ubu_b=(u_b(0),u_b(1),\cdots,u_b(N-1))$  be two complex sequences of length $N$ in $\bm{\mathcal{S}}$. The {\em cross-correlation} between $\ubu_a$ and $\ubu_b$ at shift $n$ is defined by
	\begin{equation}\label{crosscorrelation}
		\theta_{a,b}(n)=\sum_{m=0}^{N-1}
		u_a(m)u_b^*(m-n), \qquad 0\leq n \leq N-1
	\end{equation}
	where $m-n$ is taken modulo $N$ and the symbol $*$ denotes the complex conjugation. When $a=b$, the {\em auto-correlation} of $\ubu_a$ at  shift $n$ is denoted by  $\theta_{a}(n)=\theta_{a,a}(n)$.
\end{definition}

\begin{definition}
	A sequence $ \ubu_a=(u_a(0),u_a(1),\cdots,u_a(N-1))$ is referred to be {\em perfect} if
	$\theta_{a}(n)=0$ for $0<n\leq N-1$.
\end{definition}

\begin{definition}
	The length $Z_{cz}$ of {\em zero correlation zone} (ZCZ)
	of the set $\bm{\mathcal{S}}$ is defined by
	\begin{align*}
		Z_{cz}=max\{T|\theta_{a,b}(n)=0\ (0\leq |n|\leq T \ \\
		\text{for} \  a\neq b \  \text{and} \ 0<|n|\leq T \ \text{for} \  a=b ) \}.
	\end{align*}
	$\bm{\mathcal{S}}$ is referred to as an $(N,K,Z_{cz})$-ZCZ sequence set. It is clear that $|\theta_{\ubu,\vbu}(-n)|=|\theta_{\vbu,\ubu}(n)|$. To determine the value of $Z_{cz}$, it suffices to compute the correlation function of sequences $\ubu_a$ and $\ubu_b$ in set $\bm{\mathcal{S}}$ with $0\leq n \leq N-1$, i.e.,
	\begin{align*}
		Z_{cz}=max\{T|\theta_{a,b}(n)=0\ (0\leq n\leq T \ \\
		\text{for} \  a\neq b \  \text{and} \ 0<n\leq T \ \text{for} \  a=b ) \}.
	\end{align*}
\end{definition}

\begin{definition}
	An $(N,K,Z_{cz})$-ZCZ sequence set $\bm{\mathcal{S}}$ is referred to as $(N,K,Z_{cz})$ {\em interference-free} (IF) ZCZ sequence set if the cross-correlation of any two distinct sequences in $\bm{\mathcal{S}}$ is always zero. 
\end{definition}

In addition to the traditional method to study the sequences in time and Fourier space, another approach to study the sequences in Zak space was proposed in \cite{Brozik_A,Brozik_B,Brozik_Bat,Brozik_Characterization}.

\subsection{Finite Zak Transform}
Suppose for the remainder of this paper that $N=LM$, where $L$ and $M$ are
positive integers, and set $n=rM+k$, $m=iL+j$ for $0\leq k, i\leq M-1$, $0\leq r, j\leq L-1$.
\begin{definition}\label{def3}
	The $L\times M$ {\em finite Zak transform} (FZT) \cite{Zak} of sequence $\ubu=(u(0),u(1),\cdots,u(N-1))$ is defined by
	\begin{equation}\label{FZT}
		U(j,k)=\sum_{r=0}^{L-1}
		u(rM+k)\omega_L^{rj}.
	\end{equation}
\end{definition}
If we set $U(j,k)$ as the entry of an $L\times M$ matrix ${\bm U}$ at row $j$ and column $k$, it is much simple to understand the FZT by the product of the matrices. The sequence $\ubu$ can be re-expressed by an $L\times M$ matrix $\bm{A}^{\ubu}$, where the entry $A^{\ubu}(r,k)=u(rM+k)$, i.e,
\begin{equation*}
	{\bm{A}^{\ubu}}=
	\left(
	\begin{array}{cccc}
		u(0)     & u(1)     & \cdots & u(M-1)\\
		u(M)    & u(M+1) & \cdots & u(2M-1)\\
		\vdots & \vdots & \ddots & \vdots\\
		u(N-M) & u(N-M+1) & \cdots &u(N-1)\\
	\end{array}
	\right).
\end{equation*}
Let $\bm{F}$ be the DFT matrix of order $L$. It is straightforward that
\[
\bm{U}=\bm{F}\cdot{\bm{A}^{\ubu}}.
\]
If $L=N$ and $M=1$, the FZT  is identical to the DFT. If $L=1$ and $M=N$, the FZT of $\ubu$ is identical to the original sequence $\ubu$. Thus, the FZT is a primary time-frequency representation which can concurrently encodes both the time and the frequency information about a sequence.

Similar to the inverse DFT, we can define the inverse FZT, where
\begin{equation}\label{IFZT}
	u(rM+k)=L^{-1}\sum_{j=0}^{L-1}
	U(j,k)\omega_L^{-rj}.
\end{equation}
The sequence $\ubu$  can be recovered from the inverse FZT.

\subsection{FZT and Correlation of Sequences}	
The relationship between the Zak spectra and the Fourier spectra  of the sequence $\ubu$ can be given by the following formula.
\begin{equation}\label{DFT_of_FZT}
	\widehat{u}(iL+j)=
	\sum_{k=0}^{M-1}U(j,k)\omega_{N}^{jk}\omega_{M}^{ik}.
\end{equation}

For sequnces $\ubu_a$ and $\ubu_b$ of length $N$, their cross-correlation $\theta_{a, b}(n)$ at shift $n$ can be calculated by Definition \ref{crosscorrelation}. We take $\{\theta_{a, b}(0), \theta_{a, b}(1), \cdots, \theta_{a, b}(N-1)\}$ as a sequence of length $N$, denoted by ${\bm \theta_{a,b}}$. The relationship between the Zak spectra of $\ubu_a$ and $\ubu_b$ and the Zak spectra of their correlation sequence ${\bm \theta_{a,b}}$ is given below.

\begin{fact}\label{fact2}
	(\cite{Brozik_A,Brozik_B,Brozik_Bat,Brozik_Characterization}) Let ${\bm U}_a$, ${\bm U}_b$ and ${\bm \Theta_{a,b}}$  be the FZTs of sequences $\ubu_a$, $\ubu_b$, ${\bm \theta_{a,b}}$, respectively. Then we have
	\begin{equation}\label{Brozik's_FZT_of_crosscorrelation}
		\Theta_{a,b}(j,k)=\sum_{l=0}^{M-1}U(j,l)V^*(j,l-k).
	\end{equation}
\end{fact}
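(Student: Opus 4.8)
The plan is to prove \eqref{Brozik's_FZT_of_crosscorrelation} by a direct computation, expanding the finite Zak transform of the correlation sequence ${\bm \theta_{a,b}}$ in the ``radix-$M$'' coordinates fixed in Section~\ref{1}. First I would write the shift as $n=rM+k$ and the summation index of the cross-correlation \eqref{crosscorrelation} as $m=sM+l$, with $0\le k,l\le M-1$ and $0\le r,s\le L-1$, so that
\[
\Theta_{a,b}(j,k)=\sum_{r=0}^{L-1}\theta_{a,b}(rM+k)\,\omega_L^{rj}
=\sum_{r=0}^{L-1}\sum_{s=0}^{L-1}\sum_{l=0}^{M-1}u_a(sM+l)\,u_b^{*}\big((s-r)M+(l-k)\big)\,\omega_L^{rj},
\]
all arguments of $u_a$ and $u_b$ being read modulo $N$.

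The main move is then to decouple the two sequences: substituting $t=s-r$, so that $\omega_L^{rj}=\omega_L^{sj}\,\omega_L^{-tj}$, and using that $s$ and $t$ each run over a complete residue system modulo $L$ (because $u_b$ has period $N=LM$ and $\omega_L^{Lj}=1$), the triple sum factors, for each fixed $l$, into
\[
\Theta_{a,b}(j,k)=\sum_{l=0}^{M-1}\Big(\sum_{s=0}^{L-1}u_a(sM+l)\,\omega_L^{sj}\Big)\Big(\sum_{t=0}^{L-1}u_b\big(tM+(l-k)\big)\,\omega_L^{tj}\Big)^{*}.
\]
By Definition~\ref{def3} the first bracket is $U_a(j,l)$, and the conjugate of the second bracket is $U_b^{*}(j,l-k)$, which is exactly \eqref{Brozik's_FZT_of_crosscorrelation} (the matrices written $U$ and $V$ there being $U_a$ and $U_b$).

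The one step that is not purely mechanical --- and hence the main, if minor, obstacle --- is giving a consistent meaning to $U_b(j,l-k)$ when $l<k$, i.e.\ when the second argument leaves $\{0,\dots,M-1\}$. I would dispose of this by recording beforehand the quasi-periodicity of the finite Zak transform, $U_b(j,c-M)=\omega_L^{j}\,U_b(j,c)$, which follows from the $N$-periodicity of $u_b$ together with $\omega_L^{Lj}=1$; the phase $\omega_L^{j}$ it produces is exactly the one generated by the ``borrow'' $(s-r)M+(l-k)=(s-r-1)M+(l-k+M)$ that is needed to bring the column index back into $\{0,\dots,M-1\}$, so the two bookkeeping conventions agree. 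Equivalently, one may simply let $U_b(j,\cdot)$ denote the function of an arbitrary integer second argument defined by the same sum $\sum_r u_b(rM+\cdot)\,\omega_L^{rj}$, in which case the factorization above already contains the final formula with no folding at all. (A second route would be to pass through the Fourier domain via \eqref{DFT_of_FZT} and the correlation theorem, but the Zak-space computation above is shorter and self-contained.) The only care needed is to check that this convention is compatible with the reduction modulo $N$ in \eqref{crosscorrelation}; everything else is reindexing of finite sums.
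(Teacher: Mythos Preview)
Your argument is correct: the change of variables $t=s-r$ decouples the double sum, periodicity of $u_b$ modulo $N$ lets $t$ range over any full residue system modulo $L$, and the two resulting sums are precisely $U_a(j,l)$ and $U_b^*(j,l-k)$. Your treatment of the case $l<k$ via the quasi-periodicity $U_b(j,c-M)=\omega_L^{j}U_b(j,c)$ (equivalently, by allowing the second argument to be any integer in the defining sum) is exactly right.

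There is nothing to compare against, however: the paper does not prove Fact~\ref{fact2}. It is quoted as a known result from the Brodzik references \cite{Brozik_A,Brozik_B,Brozik_Bat,Brozik_Characterization}, and the only work the paper does with it is the remark immediately following, which records the quasi-periodicity $U(j,k+M)=\omega_L^{-j}U(j,k)$ and uses it to rewrite \eqref{Brozik's_FZT_of_crosscorrelation} in the split form \eqref{crosscorrelation in ZS}. That is the same bookkeeping you carry out in your last paragraph, so your handling of the negative-index issue is consistent with the paper's convention.
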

Note that $l-k$ in formula (\ref{Brozik's_FZT_of_crosscorrelation}) may be less than 0, which is not well defined. On the other hand, by extending the definition of the FZT, it is shown that FZT is quasi-periodic in the time variable \cite{Brozik_A}, i.e,
\begin{equation}
	U(j,k+M)=\omega_L^{-j}U(j,k).
\end{equation}
It is much clear to re-express the formula (\ref{Brozik's_FZT_of_crosscorrelation}) as following:
\begin{equation}\label{crosscorrelation in ZS}
	\begin{split}
	\Theta_{a,b}(j,k)=&\omega_L^{-j}\sum_{l=0}^{k-1}
	U(j,l)V^*(j,l-k+M)\\
	&+\sum_{l=k}^{M-1}U(j,l)V^*(j,l-k).
	\end{split}
\end{equation}

\section{A simple construction of optimal IF-ZCZ sequence sets}\label{2}
\textbf{Construction}: For positive integers $K$ and $M$, $\bm{\mathcal{S}_1}$ is a set containing $K$ sequences with period $N=KM$. Let $\ubu_a$ be the $a$th sequence of $\bm{\mathcal{S}_1}$, whose $n$th ($n=rM+k$) element is defined as
\begin{equation}\label{simple}
	u_a(n)=h_a(n\ {\rm mod}\ M)\omega_N^{an}, 
\end{equation}
where $0\leq n \leq N-1$ and $\hbu_a=(h_a(0),\dots,h_a(M-1))$ is an arbitrary perfect sequence of length $M$ for $0\leq a\leq K-1$. 

\begin{theorem}\label{th2}
	Sequence $\ubu_a$ in the set $\bm{\mathcal{S}}_1$ has sparse and highly structured $K\times M$ Zak spectra:
	\begin{equation}\label{FZT_of_construction1}
		U_a(j,k)=
		\begin{cases}
			Kh_a(k)\omega_N^{ak},&\text{if $(j+a)\mod K=0$,}\\
			0,&\text{else}
		\end{cases}
	\end{equation}
	for $0 \leq j\leq K-1, 0 \leq k \leq M-1$.
\end{theorem}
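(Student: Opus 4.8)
The plan is to compute the finite Zak transform of $\ubu_a$ directly from Definition~\ref{def3}, taking $L=K$ so that the FZT is the promised $K\times M$ array. Write the time index as $n=rM+k$ with $0\le k\le M-1$ and $0\le r\le K-1$, and substitute the closed form \eqref{simple} for $u_a(n)$ into \eqref{FZT}. Two elementary simplifications do almost all the work: first, $(rM+k)\bmod M=k$, so the $\hbu_a$-factor is simply $h_a(k)$, independent of $r$; second, since $N=KM$ we have $\omega_N^{a(rM+k)}=\omega_N^{arM}\,\omega_N^{ak}=\omega_K^{ar}\,\omega_N^{ak}$, which cleanly separates the $r$-dependence into a power of $\omega_K$.

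Carrying out these steps, $u_a(rM+k)=h_a(k)\,\omega_N^{ak}\,\omega_K^{ar}$, and hence
\begin{equation*}
U_a(j,k)=\sum_{r=0}^{K-1}h_a(k)\,\omega_N^{ak}\,\omega_K^{ar}\,\omega_K^{rj}
= h_a(k)\,\omega_N^{ak}\sum_{r=0}^{K-1}\omega_K^{r(a+j)} .
\end{equation*}
The remaining sum is a geometric series in the $K$-th root of unity: it equals $K$ when $K\mid (a+j)$ and $0$ otherwise. This is exactly the root-of-unity filter, and it yields the piecewise formula \eqref{FZT_of_construction1} at once. The ``sparse and highly structured'' description is then just a reading of this result: among the $K$ rows of $\bm{U}_a$ only the single row $j\equiv -a\pmod K$ is nonzero, and on that row the entries are the perfect sequence $\hbu_a$ modulated by the chirp $k\mapsto\omega_N^{ak}$.

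There is essentially no obstacle here; the proof is a short direct calculation. The only points requiring a little care are the identification $\omega_N^{arM}=\omega_K^{ar}$ (which uses $N=KM$) and the reduction of the index $rM+k$ modulo $M$. It is worth remarking that the perfectness of $\hbu_a$ plays no role in this theorem — it is needed for the zero-correlation-zone and interference-free properties of $\bm{\mathcal S}_1$, but the shape of the Zak spectrum holds for an arbitrary length-$M$ sequence $\hbu_a$.
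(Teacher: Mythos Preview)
Your proof is correct and follows essentially the same route as the paper: set $L=K$, plug \eqref{simple} into the FZT definition, use $\omega_N^{arM}=\omega_K^{ar}$ to factor out the $r$-dependence, and evaluate the resulting geometric sum $\sum_{r=0}^{K-1}\omega_K^{r(a+j)}$. Your added observation that the perfectness of $\hbu_a$ is irrelevant for this particular theorem is accurate and worth keeping.
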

Note that the entries of Zak spectra matrix ${\bm S}_a$ are all zeroes except the $(K-a)$th row.
\begin{proof}
Let $L=K$, we first compute the $K\times M$ FZT of the sequence $\sbu_a$ in construction 1. By the definition of $L\times M$ FZT,  we have
\begin{equation}\label{FZT-1}
	\begin{split}
			U_a(j,k)
		&=\sum_{r=0}^{K-1}
		u_a(k+rM)\omega_K^{rj}\\
		&=\sum_{r=0}^{K-1}
		h_a(k)\omega_N^{a(k+rM)}\omega_K^{rj}\\
		&=h_a(k)\omega_N^{ak}
		\sum_{r=0}^{K-1}\omega_K^{r(a+j)}\\
		&=Kh_a(k)\omega_N^{ak}\delta(j+a-K),
	\end{split}
\end{equation}
where $\delta(\cdot)$ is the delta function such that $\delta(0)=1$ and $\delta(j)=0$ for $j\neq 0$. 
\end{proof}

%
%
%
%

\begin{theorem}\label{th1}
	$\bm{\mathcal{S}}_1$ is an optimal $(N,K,M-1)$ IF-ZCZ  sequence set.
\end{theorem}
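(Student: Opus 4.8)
The plan is to carry out the whole argument in the Zak domain, leaning on Theorem~\ref{th2}. Set $L=K$, so that each $\ubu_a$ has the $K\times M$ spectrum $\bm{U}_a$ described in~(\ref{FZT_of_construction1}): every row of $\bm{U}_a$ vanishes except the single row $j=j_a$, where $j_a$ is the unique index in $\{0,\dots,K-1\}$ with $j_a+a\equiv0\pmod K$, and there $U_a(j_a,k)=Kh_a(k)\omega_N^{ak}$. From this one sparsity fact, together with Fact~\ref{fact2} and the inverse FZT~(\ref{IFZT}), both the interference-free property and the ZCZ length will follow at once.

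For the interference-free property, fix $a\neq b$ in $\{0,\dots,K-1\}$. Then $j_a\neq j_b$, so for every row index $j$ at least one of row $j$ of $\bm{U}_a$ and row $j$ of $\bm{U}_b$ is identically zero. Each term on the right-hand side of~(\ref{crosscorrelation in ZS}), evaluated with the spectra of $\ubu_a$ and $\ubu_b$, contains a factor from row $j$ of $\bm{U}_a$ and a factor from row $j$ of $\bm{U}_b$, hence $\Theta_{a,b}(j,k)=0$ for all $j,k$. Since $\bm{\Theta}_{a,b}$ is the FZT of the correlation sequence $\bm{\theta}_{a,b}$, applying~(\ref{IFZT}) to the zero matrix gives $\theta_{a,b}(n)=0$ for every $n$, so all cross-correlations of $\bm{\mathcal{S}}_1$ vanish.

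For the ZCZ length, take $a=b$. On every row $j\neq j_a$ the spectrum of $\ubu_a$ vanishes, so $\Theta_{a,a}(j,k)=0$ there, and it remains to compute $\Theta_{a,a}(j_a,k)$. Substituting $U_a(j_a,l)=Kh_a(l)\omega_N^{al}$ into~(\ref{crosscorrelation in ZS}), the phases $\omega_N^{al}$ in each product cancel against the corresponding $\omega_N^{-a(l-k)}$, leaving a common factor $K^2\omega_N^{ak}$; the extra factor $\omega_K^{-j_a}$ carried by the wrap-around sum combines with the $\omega_N^{-aM}$ coming from $U_a^*(j_a,l-k+M)$ to give $\omega_K^{-j_a}\omega_K^{-a}=\omega_K^{-(j_a+a)}=1$, using $\omega_N^{M}=\omega_K$ and $j_a+a\equiv0\pmod K$. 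What remains in brackets, namely $\sum_{l=0}^{k-1}h_a(l)h_a^*(l-k+M)+\sum_{l=k}^{M-1}h_a(l)h_a^*(l-k)$, is precisely the periodic autocorrelation of $\hbu_a$ at shift $k$ (the two ranges of summation together sweep all residues modulo $M$). Since $\hbu_a$ is perfect, this is $0$ for $0<k\le M-1$, so $\Theta_{a,a}(j,k)=0$ for all $j$ whenever $1\le k\le M-1$. Feeding this into~(\ref{IFZT}) gives $\theta_a(rM+k)=0$ for all $r$ and all such $k$; taking $r=0$ gives $\theta_a(n)=0$ for $1\le n\le M-1$, i.e.\ $Z_{cz}\ge M-1$.

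Putting the two parts together, $\bm{\mathcal{S}}_1$ is an $(N,K,Z_{cz})$ IF-ZCZ set with $Z_{cz}\ge M-1$, hence $K(Z_{cz}+1)\ge KM=N$; comparing with the Tang--Fan--Matsufuji bound $K(Z_{cz}+1)\le N$ recalled in the introduction forces $Z_{cz}=M-1$ and shows the bound is met with equality, so $\bm{\mathcal{S}}_1$ is optimal. I expect the only real work to be the bookkeeping in the third paragraph: keeping track of the quasi-periodicity factor $\omega_K^{-j}$ attached to the wrap-around term in~(\ref{crosscorrelation in ZS}) and verifying that it cancels $\omega_N^{-aM}$ exactly, so that the bracketed expression reassembles into the genuine periodic autocorrelation of $\hbu_a$ rather than a twisted variant of it; everything else follows directly from Theorem~\ref{th2} and from $\hbu_a$ being perfect.
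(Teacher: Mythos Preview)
Your proof is correct and follows essentially the same Zak-domain approach as the paper: use the one-row support of $\bm U_a$ from Theorem~\ref{th2} to kill all cross-correlations, then reduce the single surviving row of $\bm\Theta_{a,a}$ to the periodic autocorrelation of $\hbu_a$. Your phase bookkeeping (the cancellation $\omega_K^{-j_a}\omega_N^{-aM}=1$) is exactly what the paper uses implicitly when collapsing the two sums; the only cosmetic difference is that the paper carries the computation one step further to obtain $\theta_{\ubu_a}(n)=N\omega_K^{na/M}\delta(n\bmod M)$ and reads off $Z_{cz}=M-1$ directly from the nonzero value at $n=M$, whereas you stop at $Z_{cz}\ge M-1$ and squeeze equality from the Tang--Fan--Matsufuji bound.
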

\begin{proof}
The FZT of the cross-correlation of sequences $\sbu_a$ and $\sbu_b$ in construction 1 can be calculated by applying the formula (\ref{crosscorrelation in ZS}).

If $a\neq b$, the none zero elements must occur in different rows of their respective Zak spectra matrix ${\bm U}_a$ and ${\bm U}_b$, we immediately have
\[
\Theta_{\ubu_a,\ubu_b}(j,k)=0
\]
for $0 \leq j\leq K-1, 0 \leq k \leq M-1$. 	
By using the IFZT in (\ref{IFZT}), we obtain the cross-correlation of $\sbu_a$ and $\sbu_b$:
\begin{equation*}
	\theta_{\ubu_a,\ubu_b}=\bm{0}.
\end{equation*}
Thus the sequence set $\bm{\mathcal{S}}_1$ must be interference free.

If $a=b$ and $j\neq K-a$, we have Zak spectrum $U_a(j,k)=0$ for  $\forall k$. Then $\Theta_{\ubu_a}(j,k)=0$ follows.

If $a=b$ and $j=K-a$, from the Zak spectra in (\ref{FZT-1}), we have
$$U_a(j,k)=Kh_a(k)\omega_N^{k}.$$
Then
\begin{equation*}
	\begin{split}
		\Theta_{\ubu_a}(j,k)
		&=\omega_L^{-j}\sum_{l=0}^{k-1}
		U_a(j,l)U_a^*(j,l-k+M)\\
		&+\sum_{l=k}^{M-1}U_a(j,l)U_a^*(j,l-k)\\
		&=K^2\omega_K^{a-K}\sum_{l=0}^{k-1}
		h_a(l)\omega_N^{al}
		h_a^*(l-k+M)\omega_N^{-a(l-k+M)}\\
		&+K^2\sum_{l=k}^{M-1}h_a(l)\omega_N^{al}
		h_a^*(l-k)\omega_N^{-a(l-k)}\\
		&=K^2\omega_N^{ak}
		\sum_{l=0}^{M-1}h_a(l)
		h_a^*(l-k)\\
		&=NK\omega_N^{ak}\delta(k),
	\end{split}
\end{equation*}
which means the entries of $\Theta_{\ubu_a}$ are all zeroes except $\Theta_{\ubu_a}(K-a, 0)$.

By using the IFZT in (\ref{IFZT}),  the periodic auto-correlation of $\sbu_a$ is given below.
\begin{equation}\label{cor1}
	\begin{split}
		\theta_{\ubu_a}(k+rM)
		&=K^{-1}\sum_{j=0}^{K-1}
		\Theta_{\ubu_a}(j,k)\omega_K^{-rj}\\
		&=N\delta(k)\sum_{j=0}^{K-1}\delta(j+a-K)\omega_K^{-rj}\\
		&=N\omega_K^{ra}\delta(k).
	\end{split}
\end{equation}
In other word,
\begin{equation*}
	\theta_{\ubu_a}(n)=
	\begin{cases}
		N\omega_K^{\frac{na}{M}},&\text{if $M\mid n$,}\\
		0,&\text{else}.
	\end{cases}
\end{equation*}
So the length of ZCZ in $\bm{\mathcal{S}}_1$  is $M-1$, which satisfy the Tang-Fan-Matsufuji bound.
\end{proof}

\begin{example}\label{ex1}
	
	Let $K=M=4$, $N=16$, $\hbu_a$=(1,1,1,-1) for $0\leq a \leq 3$. There are four sequences in the set $\bm{\mathcal{S}}_1$:
	\begin{equation*}
		\begin{split}
			&\ubu_0=(1,1,1,-1,1,1,1,-1,1,1,1,-1,1,1,1,-1),\\
			&\ubu_1=(\omega_{16}^{0},\omega_{16}^{1},\omega_{16}^{2},\omega_{16}^{11},
			\omega_{16}^{4},\omega_{16}^{5},\omega_{16}^{6},\omega_{16}^{15},
			\omega_{16}^{8},\omega_{16}^{9},\\
			&\omega_{16}^{10},\omega_{16}^{15},
			\omega_{16}^{12},\omega_{16}^{13},\omega_{16}^{14},\omega_{16}^{7}),\\
			&\ubu_2=(\omega_{8}^{0},\omega_{8}^{1},\omega_{8}^{2},\omega_{16}^{7},
			\omega_{8}^{4},\omega_{8}^{5},\omega_{8}^{6},\omega_{8}^{11},
			\omega_{8}^{0},\omega_{8}^{1},\omega_{8}^2,\omega_{8}^7,\\
			&\omega_{8}^{4},\omega_{8}^{5},\omega_{8}^{6},\omega_{8}^{11}),\\
			&\ubu_3=(\omega_{16}^{0},\omega_{16}^{3},\omega_{16}^{6},\omega_{16}^{1},
			\omega_{16}^{12},\omega_{16}^{15},\omega_{16}^{2},\omega_{16}^{13},
			\omega_{16}^{8},\omega_{16}^{11},\\
			&\omega_{16}^{14},\omega_{16}^{9},
			\omega_{16}^{4},\omega_{16}^{7},\omega_{16}^{10},\omega_{16}^{5}).
		\end{split}
	\end{equation*}
	From Theorem \ref{th2}, Zak spectra of the above sequences can be respectively given below.
		\[
	\begin{matrix}
		{\bm U_0}=4\begin{bmatrix}
			1  & 1 & 1 & -1\\
			0  & 0    &   0    & 0\\
			0  & 0    &   0    & 0\\
			0  & 0    &   0    & 0\\
		\end{bmatrix}, & {\bm U_1} =4\begin{bmatrix}
			0  & 0    &   0    & 0\\
			0  & 0    &   0    & 0\\
			0  & 0    &   0    & 0\\
			1  & \omega_{16}^{1} & \omega_{16}^{2} & \omega_{16}^{11}\\
		\end{bmatrix}.
	\end{matrix}
	\]
	\[
	\begin{matrix}
		{\bm U_2}=4\begin{bmatrix}
			0  & 0    &   0    & 0\\
			0  & 0    &   0    & 0\\
			1  & \omega_{8}^1    &   \omega_{8}^2    & \omega_{8}^7\\
			0  & 0    &   0    & 0\\
		\end{bmatrix}, & {\bm U_3} =4\begin{bmatrix}
			0  & 0    &   0    & 0\\
			1  & \omega_{16}^3    &   \omega_{16}^{6}    & \omega_{16}^{1}\\
			0  & 0    &   0    & 0\\
			0  & 0    &   0    & 0\\
		\end{bmatrix}.
	\end{matrix}
	\]
	Without loss of generality, we  calculate the FZT of
	the cross-correlation of $\sbu_0$ and $\sbu_1$ and the FZT of
	auto-correlation of $\sbu_0$ by formula (\ref{crosscorrelation in ZS}):
		\[
	\begin{matrix}
		{\bm \Theta_{\ubu_0,\ubu_1}}=\begin{bmatrix}
			0  & 0    &   0    & 0\\
			0  & 0    &   0    & 0\\
			0  & 0    &   0    & 0\\
			0  & 0    &   0    & 0\\
		\end{bmatrix}, & {\bm \Theta_{\ubu_0}} =16\begin{bmatrix}
			1  & 0    &   0    & 0\\
			0  & 0    &   0    & 0\\
			0  & 0    &   0    & 0\\
			0  & 0    &   0    & 0\\
		\end{bmatrix}.
	\end{matrix}
	\]
	
	By the inverse FZT, we can obtain the cross-correlation ${\bm \theta_{\ubu_0,\ubu_1}}$ and the auto-correlation ${\bm \theta_{\ubu_0}}$ from ${\bm \Theta_{\ubu_0,\ubu_1}}$ and ${\bm \Theta_{\ubu_0}}$, respectively:
	\[
	{\bm \theta_{\ubu_0,\ubu_1}}={\bm 0},
	\]
	\[
	{\bm \theta_{\ubu_0}}=16(1,0,0,0,1,0,0,0,1,0,0,0,1,0,0,0).
	\]
	Then it is clear that the sequence set $\{\sbu_a|0\leq a \leq 3 \}$ is an optimal $(16,4,3)$ IF-ZCZ sequence set.
\end{example}

\section{Transformation of IF-ZCZ sequence sets}\label{3}
We consider simple operations which leave IF-ZCZ properties and sequence length invariant.
An optimal $(N,K,M-1)$ IF-ZCZ sequence set $\bm{\mathcal{S}}=\{\ubu_a: 0\leq a\leq K-1\}$ remains   an optimal IF-ZCZ sequence set under the following operations:
\begin{itemize}
	\item $\{\ubu_{\pi(a)}: 0\leq a\leq K-1\}$ where $\pi$ is a permutation of the set $\{0, 1, \cdots, K-1\}$.
	\item $\{u_a(n+\tau_a): 0\leq a\leq K-1,0\leq n \leq N-1\}$ where $\tau_a$ is shift with $0\leq \tau_a \leq N-1$.
	\item $\{\ubu_a e^{-2\pi \sqrt{-1}g(a)}: 0\leq a\leq K-1\}$ where $g(a)$ are any rational numbers.
\end{itemize}
The actions on an optimal IF-ZCZ sequence set generate a set of optimal IF-ZCZ sequence sets, which we call the transformations of the original optimal IF-ZCZ sequence set.

\begin{definition}
	If the optimal IF-ZCZ sequence set $\bm{\mathcal{S'}}$ is obtained by the transformations of the optimal IF-ZCZ sequence set $\bm{\mathcal{S}}$, then $\bm{\mathcal{S'}}$ is equivalent to $\bm{\mathcal{S}}$.
\end{definition}

Here we revisit the general construction of optimal IF-ZCZ sequence sets proposed in \cite{Popovic_IF-ZAZ}. $\bm{\mathcal{S}_0}$ is a set containing  $K$ interference-free sequence of length $N=KM$. The $n$th element of sequence $\ubu_a$ is defined as 
\begin{equation}\label{general}
	u_a(n)=h_a(n\ {\rm mod}\ M)\omega_{N}^{(ea+t)n}, 
\end{equation}
where $0\leq n \leq N-1, 0\leq a \leq K-1$, and $\hbu_a=(h_a(0),\dots,h_a(M-1))$ is an arbitrary perfect sequence of length $M$, and $e$ and $t$ are integers such that $gcd(e, K)=1$.

\begin{theorem}
	Each optimal IF-ZCZ sequence set $\bm{\mathcal{S}_0}$ of the general construction \cite{Popovic_IF-ZAZ} is equivalent to the optimal IF-ZCZ sequence set $\bm{\mathcal{S}_1}$.
\end{theorem}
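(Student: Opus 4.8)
The plan is to show that $\bm{\mathcal{S}}_0$ is, after a single relabelling of its $K$ sequences, literally an instance of the simple construction $\bm{\mathcal{S}}_1$ — one built from a suitably chosen family of perfect sequences — so that the equivalence follows at once, since a permutation of the sequences is one of the three admissible transformations. Because the perfect sequences $\hbu_a$ in the simple construction are \emph{arbitrary}, it suffices to exhibit perfect sequences $\hbu'_0,\dots,\hbu'_{K-1}$ of length $M$ and a permutation $\pi$ of $\{0,\dots,K-1\}$ such that the sequence $u_a$ of (\ref{general}) coincides with $u'_{\pi(a)}(n) = h'_{\pi(a)}(n \bmod M)\,\omega_N^{\pi(a)n}$.

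First I would split the chirp factor of (\ref{general}) using $n=rM+k$ and the elementary identities $\omega_N^{M}=\omega_K$ and $\omega_N^{K}=\omega_M$:
\begin{equation*}
\omega_N^{(ea+t)n} = \omega_N^{(ea+t)(rM+k)} = \omega_K^{(ea+t)r}\,\omega_N^{(ea+t)k}.
\end{equation*}
Since $\gcd(e,K)=1$, the map $\pi\colon a \mapsto (ea+t)\bmod K$ is a permutation of $\{0,1,\dots,K-1\}$. Put $b=\pi(a)$ and write $ea+t = b + K\ell_a$ with $\ell_a\in\Z$; then $\omega_K^{(ea+t)r}=\omega_K^{br}$ and $\omega_N^{(ea+t)k}=\omega_N^{bk}\,\omega_N^{K\ell_a k}=\omega_N^{bk}\,\omega_M^{\ell_a k}$, whence
\begin{equation*}
u_a(n) = \bigl(h_a(k)\,\omega_M^{\ell_a k}\bigr)\,\omega_N^{bk}\,\omega_K^{br} = h'_b(k)\,\omega_N^{bn}, \qquad h'_b(k):=h_a(k)\,\omega_M^{\ell_a k}.
\end{equation*}
As $a$ runs over $\{0,\dots,K-1\}$ so does $b$, so this defines $\hbu'_b$ unambiguously for each $b$.

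The one substantive check is that every $\hbu'_b$ is again a perfect sequence of length $M$, which makes $\{\ubu'_b : 0\le b\le K-1\}$ a bona fide instance of the simple construction. This is exactly where the perfectness of $\hbu_a$ is used: multiplying a length-$M$ sequence entrywise by the modulation $\omega_M^{\ell_a k}$ multiplies its periodic autocorrelation at every shift $m$ by the scalar $\omega_M^{\ell_a m}$ — the wrap-around index $(k-m)\bmod M$ causes no trouble because $\omega_M^{\ell_a M}=1$ — so $\theta_{\hbu'_b}(m)=\omega_M^{\ell_a m}\,\theta_{\hbu_a}(m)=0$ for $0<m\le M-1$. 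Combining the two displays gives $u_a=u'_{\pi(a)}$ for all $a$, so $\bm{\mathcal{S}}_0=\{\ubu'_{\pi(a)}:0\le a\le K-1\}$ is precisely the permutation (by $\pi$) of the simple-construction set $\{\ubu'_b:0\le b\le K-1\}$, and is therefore equivalent to $\bm{\mathcal{S}}_1$.

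I expect the only real obstacle to be this perfectness of $\hbu'_b$; the rest is bookkeeping with the two root-of-unity identities and the fact that $a\mapsto ea+t$ is a bijection modulo $K$ when $\gcd(e,K)=1$. It is worth noting that neither the time-shift nor the constant-phase transformation is needed — the extra parameters $e$ and $t$ of the general construction are absorbed purely by relabelling the sequences together with re-modulating the (freely chosen) perfect sequences — and that the conclusion should be phrased carefully: it is $\bm{\mathcal{S}}_1$ built from the \emph{particular} perfect sequences $\hbu'_b$ above, not an arbitrary member of the simple-construction family, that is permutation-equivalent to the given $\bm{\mathcal{S}}_0$.
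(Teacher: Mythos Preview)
Your proof is correct and follows essentially the same route as the paper: identify the permutation $a\mapsto(ea+t)\bmod K$, write $ea+t=b+K\ell_a$, absorb the residual factor $\omega_M^{\ell_a k}$ into the perfect sequence, and verify that this modulation preserves perfectness (the paper's computation of $\theta_a(n)$ is exactly your autocorrelation check). The paper takes a slight detour by first recasting the general construction in terms of an abstract pair $(\pi,f)$ with $f(a)\equiv\pi(a)\pmod K$ before specialising back, whereas you work directly with $e,t$; your presentation is the more streamlined of the two, but the substance is identical.
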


\begin{proof}
	The general construction can be re-expressed in the following manner.
	For positive integers $K$ and $M$, each sequence $\ubu_a$ in \cite{Popovic_IF-ZAZ} is based on the following perfect sequence and functions:
	\begin{itemize}
		\item[(1)] $\hbu_a=(h_a(0),\dots,h_a(M-1))$ is an arbitrary perfect sequence of length $M$ for $0\leq a\leq K-1$.
		\item[(2)] $\pi$ is a permutation of the set $\{0, 1, \cdots, K-1\}$.
		\item[(3)] $f$ is a function from the set $\{0,1,\dots,K-1\}$ to the set $\{0,1,\dots,N-1\}$ such that $f(a)\equiv\pi(a)$ (mod $K$).
	\end{itemize}
	The $n$th element of sequence $\ubu_a$ is defined as	
	\begin{equation}\label{construction2}
		u_a(n=rM+k)=h_a(k)\omega_N^{f(a)k+rM\pi(a)},
	\end{equation}
	for $0\leq k\leq M-1$ and $0\leq a,r \leq K-1$.

	On one hand, if we set $f(a)=ea+t$ for gcd$(e, K)=1$ and $\pi(a)\equiv f(a)$ (mod $K$), (\ref{construction2}) can be re-expressed by
	$$u_a(n)=h_a(n\ {\rm mod}\ M)\omega_{N}^{(ea+t)n},$$
	which is exactly the expression of sequences in \cite{Popovic_IF-ZAZ}.

	On the other hand, it is clear that the set $\{\ubu_{\pi(a)}\}$ is equivalent to the set $\{\ubu_a\}$ for $0\leq a\leq K-1$, so we can always set $\pi(a)=a$ and $f(a)=t_aK+a$ in (\ref{construction2}) by the following expression:
	\begin{equation}\label{Mid}
		u_a(n=rM+k)=h_a(k)\omega_N^{(t_aK+a)k+rMa}=(h_a(k)\omega_M^{t_ak})\omega_N^{an}.
	\end{equation}

	By definition, the $n$th element of the auto-correlation of $\{h_a(k)\omega_M^{t_ak}\}$ is defined as
	\[
	\begin{split}
		\theta_a(n)
		&=\sum_{k=0}^{M-1}h_a(k)\omega_M^{t_ak}h_a^*(k-n)\omega_M^{-t_a(k-n)}\\
		&=\omega_M^{t_an}\sum_{k=0}^{M-1}h_a(k)h_a^*(k-n)
	\end{split}
	\]
	where $0\leq n,k \leq M-1$. Since $\{h_a(k)\}$ is a perfect sequence, we have $\sum_{k=0}^{M-1}h_a(k)h_a^*(k-n)=0$ for $0\leq k \leq M-1$. It is obvious that $\theta_a(n)=0$ for $0< n \leq M-1$. This means that $\{h_a(k)\omega_M^{t_ak}\}$ is perfect. Therefore, sequences (\ref{Mid})  can be simplified by
	\begin{equation*}
		u_a(n=rM+k)=h_a(k)\omega_N^{an}.
	\end{equation*}
	which is exactly the expression of optimal IF-ZCZ sequences in the simple construction.
\end{proof}

Compared with the general construction, there is no need to consider the parameters $e, t$ and their constraints. Since the parameters $e$ and $t$ in the general construction do not produce essentially different sequence sets, but only replace the index of the sequence set. 
The construction (\ref{simple}) can be interpreted as consisting of element-by-element multiplication of $K$ discrete Fourier transform sequences of $N=KM$ with a common sequence obtained by periodically extending $K$ times a perfect sequence $\hbu_a$ of any length $M$. 

Generally speaking, perfect sequences with constant magnitude can be classified into four classes:
\begin{itemize}
	\item generalized Frank sequences duo to Kumar, Scholtz, and Welch \cite[Theoerm 3]{Kumer},
	\item generalized chirp-like polyphase sequences due to Popovic \cite{Popovic chirp like},
	\item Milewski sequences \cite{Milewski},
	\item the general construction of generalized bent function due to Chung and Kumar \cite{Chung}.
\end{itemize}
Further, the above four classes are special cases of a unified construction of perfect sequences with constant magnitude proposed by Mow \cite{Mow}. Besides, perfect sequences with non-constant magnitude can be obtained from the discrete Fourier transform of the constant magnitude sequences. Especially, in the fourth-generation cellular standard LTE \cite{4G}, the random access preambles are generated from $(\ref{simple})$ by Zadoff-Chu sequences.


%

\section{Smaller Alphabet Size}\label{4}
As an important special case of the general construction \cite{Popovic_IF-ZAZ}, Popovic presented the construction of generalized chirp-like IF-ZCZ sequences that allow particularly simple implementation of the corresponding banks of matched filters. From the expression of the simple construction of optimal IF-ZCZ sequence sets, it seems that the alphabet size of the sequence sets is not less than period $N$. Actually, by selecting appropriate perfect sequences with period $M^2$, 
we indicate that the alphabet size of the simple construction of optimal IF-ZCZ sequence sets can be $KM$, which is a factor of period $N=KM^2$. 

For positive integers $K$ and $M$, $\bm{\mathcal{S}_2}$ is a set containing $K$ sequences with period $N=KM^2$. Let $\ubu_a$ be the $a$th sequence of $\bm{\mathcal{S}_2}$, whose $n$th ($n=rM+k$) element is defined as
\begin{equation}
	u_a(n=rM+k)=\omega_{KM}^{r(K\sigma_a(k)+a)},
\end{equation}
where $0 \leq r \leq KM-1$, $0 \leq k \leq M-1$, and $\sigma_a$ is a permutation of the set $\{0, 1, \cdots, M-1\}$ for $0\leq a \leq K-1$. 

\begin{theorem}\label{th4}
	Sequence $\sbu_a$ in the set $\bm{\mathcal{S}}_2$ has sparse and highly structured $KM\times M$ Zak spectra:
	\begin{equation}\label{FZT_of_construction1}
		U_a(j,k)=
		\begin{cases}
			KM,&\text{if $j=KM-K\sigma_a(k)-a$,}\\
			0,&\text{else}
		\end{cases}
	\end{equation}
	where $0 \leq j\leq KM-1, 0 \leq k \leq M-1$.
\end{theorem}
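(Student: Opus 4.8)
The plan is to repeat, almost verbatim, the geometric-sum computation used in the proof of Theorem~\ref{th2}, since each sequence of $\bm{\mathcal{S}}_2$ is again constant in $k$ along each row and a single complex exponential in $r$. First I would set $L=KM$, so that $N=KM^2=LM$ and Definition~\ref{def3} applies with an $L\times M=KM\times M$ Zak matrix. Writing $n=rM+k$ with $0\le r\le KM-1$ and $0\le k\le M-1$, substitute the defining expression $u_a(rM+k)=\omega_{KM}^{\,r(K\sigma_a(k)+a)}$ into~(\ref{FZT}), noting that $\omega_{KM}=\omega_L$.

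The main step is then recognizing that, for fixed $k$, the summand $\omega_{KM}^{\,r(K\sigma_a(k)+a)}\,\omega_L^{rj}$ is a geometric progression in $r$ with ratio $\omega_{KM}^{\,K\sigma_a(k)+a+j}$, so that
\[
U_a(j,k)=\sum_{r=0}^{KM-1}\omega_{KM}^{\,r(K\sigma_a(k)+a+j)}
=KM\,\delta\!\big((K\sigma_a(k)+a+j)\bmod KM\big),
\]
exactly as $\sum_{r=0}^{K-1}\omega_K^{r(a+j)}$ collapsed to $K\delta(j+a-K)$ in~(\ref{FZT-1}). Hence $U_a(j,k)$ equals $KM$ when $j\equiv-(K\sigma_a(k)+a)\pmod{KM}$ and $0$ otherwise.

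It remains to put the congruence into the closed form claimed. Since $\sigma_a$ takes values in $\{0,\dots,M-1\}$ we have $0\le K\sigma_a(k)\le KM-K$, and together with $0\le a\le K-1$ this yields $0\le K\sigma_a(k)+a\le KM-1$; therefore the unique representative of $-(K\sigma_a(k)+a)$ in $\{0,1,\dots,KM-1\}$ is $KM-K\sigma_a(k)-a$ (to be read modulo $KM$, i.e.\ as $0$, in the single borderline case $\sigma_a(k)=a=0$), which is precisely the stated row index. As a consistency check, because $\sigma_a$ is a permutation, as $k$ runs over $\{0,\dots,M-1\}$ the index $KM-K\sigma_a(k)-a$ attains each residue $\equiv-a\pmod K$ exactly once, so the spectrum has exactly one nonzero entry per column and $M$ in total, which justifies the ``sparse and highly structured'' assertion. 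I do not foresee a genuine obstacle here; the only delicate point is the modular bookkeeping in this last paragraph, in particular the borderline value where the naive formula returns $KM$ rather than $0$.
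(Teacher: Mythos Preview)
Your proposal is correct and follows essentially the same route as the paper: the paper's proof is the three-line geometric-sum computation
\[
U_a(j,k)=\sum_{r=0}^{KM-1}u_a(rM+k)\omega_{KM}^{rj}
=\sum_{r=0}^{KM-1}\omega_{KM}^{\,r(K\sigma_a(k)+a+j)}
=KM\,\delta(j+K\sigma_a(k)+a-KM),
\]
with no further discussion. Your added modular bookkeeping and the observation about the borderline case $\sigma_a(k)=a=0$ are more careful than the paper itself, which simply writes the delta at $j+K\sigma_a(k)+a-KM$ without comment.
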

\begin{proof}
By the definition of $KM\times M$ FZT,  we have
\begin{equation}\label{FZT-1}
	\begin{split}
		U_a(j,k)
		&=\sum_{r=0}^{KM-1}
		u_a(rM+k)\omega_{KM}^{rj}\\
		&=\sum_{r=0}^{KM-1}
		\omega_{KM}^{r(K\sigma_a(k)+a+j)}\\
		&=KM\delta(j+K\sigma_a(k)+a-KM).
	\end{split}
\end{equation}
for $0 \leq j\leq KM-1, 0 \leq k \leq M-1$. 
\end{proof}

Note that for sequence $\sbu_a$,  the position $(j, k)$ of the none-zero elements in matrix $\bm{U}_a$ must satisfy $j+K\sigma_a(k)+a=KM$. For each column $k$,  there is a unique $j$ such that $$j=KM-K\sigma_a(k)-a,$$ so there is only one none-zero element in each column of matrix $\bm{U}_a$. Moreover, for each row $j$, there is a unique none-zero element in this row if and only if
$$j+a\equiv 0 \  (\mbox{mod} \ K).$$ Otherwise, the elements in row $j$ must be all zeroes.

\begin{theorem}
	$\bm{\mathcal{S}_2}$ is an optimal $(N,K,M^2-1)$ IF-ZCZ sequence set with alphabet size $KM$ which is a special case of the simple construction (\ref{simple}).
\end{theorem}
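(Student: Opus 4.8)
The plan is to prove every assertion at once by showing that $\bm{\mathcal{S}_2}$ is \emph{literally} an instance of the simple construction (\ref{simple}), with the role of $M$ there played by $M^2$. Once this is established, optimality, the interference-free property and the value $Z_{cz}=M^2-1$ follow immediately from Theorem \ref{th1} applied with $N=KM^2$ and $L=K$, and only the statement about the alphabet size requires a separate --- but one-line --- argument.

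First I would extract a chirp factor from $u_a$. Writing $n=rM+k$ with $0\le k\le M-1$ and $0\le r\le KM-1$, one has $u_a(n)=\omega_{KM}^{r(K\sigma_a(k)+a)}=\omega_M^{r\sigma_a(k)}\,\omega_{KM}^{ra}$ and $\omega_{KM^2}^{an}=\omega_{KM}^{ra}\,\omega_{KM^2}^{ak}$, so that $u_a(n)\,\omega_{KM^2}^{-an}=\omega_M^{(r\bmod M)\sigma_a(k)}\,\omega_{KM^2}^{-ak}$ depends on $n$ only through $n\bmod M^2=(r\bmod M)M+k$. Hence $u_a(n)=h_a(n\bmod M^2)\,\omega_{KM^2}^{an}$ with $\hbu_a=(h_a(0),\dots,h_a(M^2-1))$ defined by $h_a(sM+k):=\omega_M^{s\sigma_a(k)}\,\omega_{KM^2}^{-ak}$ for $0\le s,k\le M-1$. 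This is exactly the shape of (\ref{simple}) with $M^2$ in place of $M$, provided each $\hbu_a$ is a perfect sequence of length $M^2$; this last point is the crux.

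To prove $\hbu_a$ perfect I would pass to the Fourier domain through its $M\times M$ FZT: $H_a(j,k)=\omega_{KM^2}^{-ak}\sum_{s=0}^{M-1}\omega_M^{s(\sigma_a(k)+j)}=M\,\omega_{KM^2}^{-ak}\,\delta\!\big((j+\sigma_a(k))\bmod M\big)$, so every column of $\bm{H}_a$ has exactly one nonzero entry, of modulus $M$; and since $k\mapsto(-\sigma_a(k))\bmod M$ is again a permutation of $\{0,\dots,M-1\}$, every row of $\bm{H}_a$ also has exactly one nonzero entry. Feeding this into (\ref{DFT_of_FZT}) with $N$ replaced by $M^2$ gives $|\widehat{h_a}(\ell)|=M$ for all $\ell$, which is equivalent to $\theta_{\hbu_a}(d)=0$ for $0<d\le M^2-1$, i.e. $\hbu_a$ is perfect. (Equivalently, one can avoid the perfectness of $\hbu_a$ altogether and compute $\theta_{\ubu_a}$ directly from Theorem \ref{th4} and (\ref{crosscorrelation in ZS}) exactly as in the proof of Theorem \ref{th1}: for $a\ne b$ the nonzero rows of $\bm{U}_a$ and $\bm{U}_b$ are disjoint, lying in rows $j\equiv-a$ and $j\equiv-b\pmod K$ respectively, so $\Theta_{\ubu_a,\ubu_b}\equiv 0$ and $\theta_{\ubu_a,\ubu_b}=\bm{0}$; while for $a=b$ each nonzero row of $\bm{U}_a$ has a single nonzero entry of modulus $KM$ because $\sigma_a$ is a permutation, giving $\Theta_{\ubu_a}(j,k)=(KM)^2\delta(k)$ on those rows and, after the inverse FZT, $\theta_{\ubu_a}(n)=N\omega_K^{na/M^2}$ for $M^2\mid n$ and $0$ otherwise.)

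Finally, the alphabet size: reading the original formula, every entry $u_a(rM+k)=\omega_{KM}^{r(K\sigma_a(k)+a)}$ is a $KM$-th root of unity, so the alphabet of $\bm{\mathcal{S}_2}$ lies in $\{\omega_{KM}^{\ell}:0\le\ell\le KM-1\}$ and hence has size at most $KM$, a proper divisor of the period $N=KM^2$. Combined with the identification of $\bm{\mathcal{S}_2}$ as the simple construction (\ref{simple}) with $M^2$ in place of $M$ --- whence Theorem \ref{th1} yields that it is an optimal $(KM^2,K,M^2-1)$ IF-ZCZ sequence set --- this proves the theorem. The only non-routine step is establishing that $\hbu_a$ is perfect; everything else is bookkeeping once the chirp factor has been extracted.
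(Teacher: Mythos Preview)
Your proof is correct and follows the same strategy as the paper: both identify $\bm{\mathcal{S}}_2$ as the simple construction (\ref{simple}) with $M^2$ in place of $M$ and a modulated Frank-type sequence $\hbu_a$ of length $M^2$, then invoke Theorem \ref{th1} and read off the alphabet size from the defining formula. The only difference is that the paper asserts the perfectness of $\hbu_a$ by appealing to Mow's unified construction of perfect sequences, whereas you verify it directly via the $M\times M$ FZT and the constant-modulus Fourier spectrum; your argument is self-contained but otherwise identical in spirit.
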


\begin{proof}
	We choose a generalized modulatable Frank sequence with period $M^2$ as $\hbu_a$, whose $n$th element is defined as
	\begin{equation}\label{Frank sequence}
		h_a(n)=c_a(n\ mod \ M)\omega_{M^2}^{-a(n\ mod \ M)}\omega_M^{\lfloor n/M \rfloor\sigma_a(n \ mod \ M)},
	\end{equation}
	where $\cbu_a=(c_a(0),c_a(1),\cdots,c_a(M-1))$ is a complex sequence with unit magnitude elements of period $M$ and $\sigma$ is a permutation of the set $\{0, 1, \cdots, M-1\}$ for $0\leq a \leq K-1$. By setting $s=1,r_0=n_0=n_1=0$ in \cite[Theorem 1]{Mow}, we can obtain a generalized Frank sequence $\{\omega_M^{\lfloor n/M \rfloor\pi_a(n \ mod \ M)}\}$ 
	which is periodically modulated with an arbitrary sequence $\{c_a(n\ mod \ M)\omega_N^{-a(n\ mod \ M)}\}$ of period $M$.
	
	By inserting $(\ref{Frank sequence})$ in $(\ref{simple})$, we obtain
	\begin{equation}\label{construcion 1}
		u_a(n=rM+k)=c_a(k)\omega_{KM}^{r(K\sigma_a(k)+a)},
	\end{equation}
	for $0 \leq r \leq KM-1$ and $0 \leq k \leq M-1$. 
	
	Let $\cbu_a=1$. We have
	\[
	u_a(n=rM+k)=\omega_{KM}^{r(K\sigma_a(k)+a)},
	\]
	where the alphabet size is $KM$.
\end{proof}

	\begin{example}\label{ex2}
	Let $K=2,M=4,N=32$,  $\sigma_0(k)=k$ and $\sigma_1(0,1,2,3)=(1,3,2,0)$. There are two sequences in the set $\bm{\mathcal{S}}$:
	\[
	\begin{split}
		\ubu_0=
		(&\omega_8^{0},\omega_8^{0},\omega_8^{0},\omega_8^{0},
		\omega_8^{1},\omega_8^{3},\omega_8^{5},\omega_8^{7},\\
		&\omega_8^{2},\omega_8^{6},\omega_8^{2},\omega_8^{6},
		\omega_8^{3},\omega_8^{1},\omega_8^{7},\omega_8^{5},\\
		&\omega_8^{4},\omega_8^{4},\omega_8^{4},\omega_8^{4},
		\omega_8^{5},\omega_8^{7},\omega_8^{1},\omega_8^{3},\\
		&\omega_8^{6},\omega_8^{2},\omega_8^{6},\omega_8^{2},
		\omega_8^{7},\omega_8^{5},\omega_8^{3},\omega_8^{1}),\\
	\end{split}
	\]
	\[
	\begin{split}
		\ubu_1=
		(&\omega_4^{0},\omega_4^{0},\omega_4^{0},\omega_4^{0},
		\omega_4^{2},\omega_4^{0},\omega_4^{3},\omega_4^{1},\\
		&\omega_4^{0},\omega_4^{0},\omega_4^{2},\omega_4^{2},
		\omega_4^{2},\omega_4^{0},\omega_4^{1},\omega_4^{3},\\
		&\omega_4^{0},\omega_4^{0},\omega_4^{0},\omega_4^{0},
		\omega_4^{2},\omega_4^{0},\omega_4^{3},\omega_4^{1},\\
		&\omega_4^{0},\omega_4^{0},\omega_4^{2},\omega_4^{2},
		\omega_4^{2},\omega_4^{0},\omega_4^{1},\omega_4^{3}).
	\end{split}
	\]
	
	From definition (\ref{def3}) , their Zak spectra can be respectively given below.
	\[
	\begin{matrix}
		{\bm U_0}=8\begin{bmatrix}
			0  & 0    &   0     &  0  &   0  &  0  &  0  &  1 \\
			0  & 0    &   0     &  0  &   0  &  1  &  0  &  0  \\
			0  & 0    &   0    &   1   &  0  &  0  &  0  &  0 \\
			0  & 1    &   0    &   0  &   0  &  0  &  0  &  0 \\
		\end{bmatrix}^T,
	\end{matrix}
	\]
	\[
	\begin{matrix}
		 {\bm U_1} =8\begin{bmatrix}
			0  & 0    &   0     &  0  &   1  &  0  &  0  &  0 \\
			1  & 0    &   0     &  0  &   0  &  0  &  0  &  0  \\
			0  & 0    &   1    &   0   &  0  &  0  &  0  &  0 \\
			0  & 0    &   0    &   0  &   0  &  0  &  1  &  0 \\
		\end{bmatrix}^T.
	\end{matrix}
	\]
	The FZTs of cross-correlation ${\bm \theta_{1,2}}$ and auto-correlation ${\bm \theta_{1}}$ and ${\bm \theta_{2}}$  are given as following.
	\[
	\begin{matrix}
		{\bm \Theta_{0,1}}=\begin{bmatrix}
			0  & 0    &   0     &  0   &  0  &  0  &  0  &   0  \\
			0  & 0    &   0     &  0  &  0  &  0  &  0  &  0  \\
			0  & 0    &   0    &   0  &  0  &  0  &  0  &  0 \\
			0  & 0    &   0    &   0  &  0  &  0  &  0  &   0 \\
		\end{bmatrix}^T,
	\end{matrix}
	\]
	
	\[
	{\bm \Theta_{0}} =64\begin{bmatrix}
		0  & 1    &   0     &  1   &  0  &  1  &  0  &   1  \\
		0  & 0    &   0     &  0  &  0  &  0  &  0  &  0  \\
		0  & 0    &   0    &   0  &  0  &  0  &  0  &  0 \\
		0  & 0    &   0    &   0  &  0  &  0  &  0  &   0 \\
	\end{bmatrix}^T,
\]
\[
	\begin{matrix}
			{\bm \Theta_{1}}=64\begin{bmatrix}
			1  & 0    &   1     &  0  &   1  &  0  &  1  &   0  \\
			0  & 0    &   0     &  0  &  0  &  0  &  0  &  0  \\
			0  & 0    &   0    &   0  &  0  &  0  &  0  &  0 \\
			0  & 0    &   0    &   0  &  0  &  0  &  0  &   0 \\
		\end{bmatrix}^T.
	\end{matrix}
	\]

	By the inverse FZT, we can obtain the cross-correlation ${\bm \theta_{1,2}}$ and the auto-correlation ${\bm \theta_{1}}$ and ${\bm \theta_{2}}$ from ${\bm \Theta_{1,2}}$ ${\bm \Theta_{1}}$ and ${\bm \Theta_{2}}$, respectively:
	\[
	{\bm \theta_{0,1}}={\bm 0},
	\]
	\[
	\begin{split}
	{\bm \theta_{0}}=32&(1,0,0,0,0,0,0,0,0,0,0,0,0,0,0,0,\\
	&-1,0,0,0,0,0,0,0,0,0,0,0,0,0,0,0),
	\end{split}
	\]
	\[
	\begin{split}
	{\bm \theta_{1}}=32&(1,0,0,0,0,0,0,0,0,0,0,0,0,0,0,0,\\
	&1,0,0,0,0,0,0,0,0,0,0,0,0,0,0,0).
	\end{split}
	\]
	It is clear that the sequence set $\{\ubu_a\mid0\leq a \leq 1 \}$ is an optimal $(32,2,15)$ IF-ZCZ sequence set.
\end{example}

\begin{remark}
	Note that the IF-ZCZ set proposed \cite[Example 3]{Brozik_A} cannot reach the Tang-Fan-Matsufuji bound with the sequence length $N=32$ and set size $K=2$, while the sequence set in Example 2 from our construction is optimal with the same length and size parameters.
\end{remark}

\section{Computational Complexity}\label{5}
Both constructions $\bm{S}_1$ and $\bm{S}_2$ with sparse and highly structured Zak have certain advantages in terms of the implementations complexity of the corresponding banks of matched filters. In order to demonstrate these advantages, the total number of complex multiplications and complex additions will be used as the implementation complexity evaluation. Typically the DFT/IDFT is performed by some Fast Fourier Transform (FFT) algorithm. The usual approximation of the number of complex multiplications and complex additions in an $N$-points FFT/IFFT algorithm is $(N/2){\rm log_2}N$ and $N{\rm log_2}N$, respectively.

\begin{algorithm}[!htbp]
	\caption{A Zak-Domain Implementation}
	\label{alg:C}
	\begin{algorithmic}
		\REQUIRE ~~\\
		
		Compute $L\times M$ FZTs of $K$ reference sequences, yielding
		\[
		U_a(j,k)=\sum_{r=0}^{L-1}u_a(k+rM)\omega_L^{rj},
		\]
		 where $0\leq j\leq L-1,\ 0\leq k \leq M-1,0\leq a \leq K-1$.
		
		\ENSURE ~~\\
		
		Step 1 : Compute $L\times M$ FZT of the vector of input  samples $\ybu$, yielding
		\[
		Y(j,k)=\sum_{r=0}^{L-1}y(k+rM)\omega_L^{rj},
		\]
		 where $0\leq j\leq L-1,\ 0\leq k \leq M-1$.
		
		Step 2 : Perform ZS crosscorrelations $\theta_{y,u}$ of $y$ and  the $K$ reference sequences, the ZS crosscorrelation of $\ybu$ and $\sbu_u$ is defined by
		\[
		\Theta_{\ybu,\ubu_a}(j,k)=\sum_{l=0}^{M-1}Y(j,l)U_a^*(j,l-k).
		\]
		where $0\leq j\leq L-1,\ 0\leq k \leq M-1,0\leq a \leq K-1$,
		
		Step3 : Compute the inverse FZT of the results of previous step, the
		crosscorrelation of $\ybu$ and $\sbu_a$
		\[
		\theta_{\ybu,\ubu_a}(k+rM)=\frac{1}{N}\sum_{j=0}^{L-1}\Theta_{\ybu,\ubu_a}(j,k)\omega_L^{-rj}
		\]
		where $0\leq r \leq L-1,\ 0\leq k \leq M-1,0\leq a \leq K-1$.
	\end{algorithmic}
\end{algorithm}
For a set of $K$ matched filter banks corresponding to the IF-ZCZ sequence set, step 1 only needs to be performed once, and the $L$-point DFT is calculated $M$ times to obtain the Zak domain array $\bm{Y}$, which requires $(N/2){\rm log}_2L$ complex multiplications and $N{\rm log}_2L$ complex additions. Generally speaking, in step 2, each matched filter needs to calculate L times $M$-point convolution, $K$ matched filters need $KLM$ complex multiplications, and $KL(M-1)$ complex multiplications addition. In step 3, each matched filter needs to calculate $M$ times $L$-L point IDFT, which requires $K(N/2){\rm log}_2L$ complex multiplications and $KN{\rm log}_2L $ times complex addition. 

If the Zak domain matched filter algorithm is implemented based on the construction $\bm{S_1}$, then step 2 only needs to perform the $M$ point convolution operation on the $K$ matched filters, because the construction $\bm{S_1}$ The IF-ZCZ sequence has only $M$ non-zero FZT coefficients and is located in a fixed row in Zak space with a unique maximum magnitude value. In step 3, you only need to find the position of the maximum amplitude value in step 2 to determine. In summary, the computational complexity is
\[
\begin{split}
	O_1^1(N,K)
	&=M\cdot\frac{3}{2}K{\rm log}_2K+K\cdot M(2M-1)\\
	&=\frac{3}{2}N{\rm log}_2K+\frac{2N^2}{K}-N.
\end{split}
\]

If the Zak domain matched filter algorithm is implemented based on the construction $\bm{S_2}$, then step 2 only requires bit shifting of $K$ matched filters. Each sequence has only $M$ nonzero coefficients of 1 FZTs in different columns and equally spaced rows in Zak space. In step 3, it is only necessary to perform IFZT on the Zak spectrum obtained in step 2 with only $M$ rows not zero. In summary, the computational complexity is
\[
\begin{split}
	O_1^2(N,K)
	&=M\cdot\frac{3}{2}K{\rm log}_2K+K\cdot \frac{3}{2}M\log_2M\\
	&=\frac{3}{2}N{\rm log}_2N.
\end{split}
\]

In general, a direct time-domain implementation of a matched filter involves $N$ point convolution of $N$ consecutive received signal samples $\ybu$ with a reference sequence $\ubu_a$ of length $N$, requiring $N^2$ complex multiplications and N(N-1) complex additions. Therefore, a set of K matched filters associates the same input sample vector with K reference sequences, and the total computational complexity of a set of K matched filters based on the time domain is
\[
O_2(N,K)=KN(2N-1).
\]

To illustrate how big the savings in receiver complexity can be if the construction $\bm{S_1}$ and $\bm{S_2}$ are deployed. In the table below, we give the values of $O_1^1(N,K)$, $O_1^2(N,K)$ and
$O_2(N,K)$. 

\begin{table}[H]
	\centering
	\caption{三种算法的计算复杂度比较}
	\begin{tabular}{|c|c|c|c|c|}
		\hline
		\textbf{Period $N$}
		&128
		&128
		&128
		&128\\
		\hline
		\textbf{Size $K$}
		&32
		&16
		&8
		&4\\
		\hline
		\textbf{$O_1^1(N,K)$}
		&926
		&2668
		&4544
		&8848\\
		\hline
		\textbf{$O_1^2(N,K)$}
		&1344
		&1344
		&1344
		&1344\\
		\hline
		\textbf{$O_2(N,K)$}
		&1044480
		&522240
		&261120
		&130560\\
		\hline
	\end{tabular}
\end{table}

\section{Conclusion}\label{6}
It has been shown that all known constructions of optimal IF-ZCZ sequence sets are included in the general construction \cite{Popovic_IF-ZAZ}. We have proven that the general construction is equivalent to a simple construction of optimal IF-ZCZ sequence sets. It is implied that the alphabet size for the special case of the simple construction of the optimal IF-ZCZ sequence set can be a factor of the period. The computational complexity of implementing matched filter banks based on the simple construction of the optimal IF-ZCZ sequence set and its special case can be reduced.
Whether there is an optimal IF-ZCZ sequence set that is not equivalent to the general construction \cite{Popovic_IF-ZAZ} is an open question.

\end{document}